\documentclass[a4paper,UKenglish,cleveref,autoref,numberwithinsect] {lipics-v2021} 

\pdfoutput=1 
\hideLIPIcs  

\title{On first-order model checking parameterized by the number of variables} %

\author{Jan Jedelský}{Faculty of Informatics, Masaryk University, Brno, Czech Republic}{xjedelsk@fi.muni.cz}{https://orcid.org/0000-0001-9585-2553}{}

\authorrunning{J. Jedelský} 

\Copyright{Jan Jedelský} 

\ccsdesc[500]{Theory of computation~Logic}
\ccsdesc[300]{Theory of computation~Fixed parameter tractability}
\ccsdesc[100]{Mathematics of computing~Graph theory}

\keywords{Model checking, first-order logic, tree-depth, shrub-depth} 

\category{} 

\relatedversion{} 

\funding{The author has been supported by the project 26-21334S of the Czech Science Foundation.}

\nolinenumbers 

\EventEditors{John Q. Open and Joan R. Access}
\EventNoEds{2}
\EventLongTitle{42nd Conference on Very Important Topics (CVIT 2016)}
\EventShortTitle{CVIT 2016}
\EventAcronym{CVIT}
\EventYear{2016}
\EventDate{December 24--27, 2016}
\EventLocation{Little Whinging, United Kingdom}
\EventLogo{}
\SeriesVolume{42}
\ArticleNo{23}

\usepackage[utf8]{inputenc}
\usepackage[T1]{fontenc}
\usepackage{amsmath}
\usepackage{amssymb}
\usepackage{amsthm}
\usepackage{stmaryrd}
\usepackage{mathrsfs}
\usepackage{mathtools}

\usepackage{url}
\usepackage{todonotes}
\usepackage{mathrsfs} 
\usepackage{soul}
\usepackage{relsize}
\usepackage[all,defaultlines=3]{nowidow}
\usepackage[mathlines]{lineno}
\usepackage{multicol}
\usepackage{xspace}
\usepackage{lineno}
\usepackage{nicefrac}
\usepackage{wasysym}
\usepackage{textpos}

\usepackage{etoolbox}
\usepackage{refcount}
\usepackage{apxproof}
\usepackage{keytheorems}[overload]

\newcommand{\NN}{\mathbb{N}}
\renewcommand{\AA}{\mathbb{A}}

\DeclareMathOperator{\tp}{\textsf{tp}}

\newkeytheorem{case}[name=Case]
\AtBeginEnvironment{proof}{\setcounter{case}{0}}
\newtheorem{case2}{Case}[case]
\newtheorem{question}[theorem]{Question}

\newkeytheorem{retheorem}[sibling=theorem,name=Theorem]
\newkeytheorem{recorollary}[sibling=corollary,name=Corollary]
\newkeytheorem{reclaim}[sibling=claim,name=Claim]

\def\fixcref#1#2#3#4{
    \AddToHook{env/#1/begin}{\crefalias{theorem}{#1}}
    \crefname{#1}{#1}{#2}
    \Crefname{#1}{#3}{#4}
    \crefformat{#1}{##2#1~##1##3}
    \Crefformat{#1}{##2#3~##1##3}
}
\fixcref{theorem}{theorems}{Theorem}{Theorems}
\fixcref{corollary}{corollaries}{Corollary}{corollaries}
\fixcref{lemma}{lemmas}{Lemma}{Lemmas}
\fixcref{question}{questions}{Question}{Questions}
\fixcref{conjecture}{conjectures}{Conjecture}{Conjectures}
\fixcref{claim}{claims}{Claim}{Claims}

\newcommand{\classP}{$\mathsf{P}$}
\newcommand{\NP}{$\mathsf{NP}$}
\newcommand{\FPT}{$\mathsf{FPT}$}
\newcommand{\XP}{$\mathsf{XP}$}

\newcommand{\AW}[1]{$\mathsf{AW[#1]}$}

\newcommand{\ca}{\mathcal}

\newenvironment{subproof}[1][Subproof]{%
  \renewcommand{\qedsymbol}{$\blacksquare$}%
  \begin{proof}[#1]%
}{%
  \end{proof}%
}

\usepackage{todonotes}

\begin{document}

\maketitle

\begin{abstract}
    The first-order (FO) model checking problem asks, given an FO sentence $\phi$ and a graph $G$, whether $G$ is a model of $\phi$. This problem is known to be $\mathsf{AW[*]}$-hard when parameterized by the quantifier rank of the formula. A classical algorithm decides this problem in XP-time parameterized by the number of variables in the formula.

    Due to $\mathsf{AW[*]}$-hardness, it is natural to ask about the complexity of the problem when restricted to some well-behaved class of graphs. There are many results describing graph classes $\mathcal{C}$ such that the FO model checking problem restricted to $\mathcal{C}$ admits an $\mathsf{FPT}$-time algorithm when parameterized by the quantifier rank of the formula.

    Parameterization by the quantifier rank is significantly more restrictive than parameterization by the number of variables.
    We investigate the graph classes $\mathcal{C}$ for which the FO model checking problem restricted to $\mathcal{C}$ admits an $\mathsf{FPT}$-time algorithm when parameterized by the number of variables in the formula. We characterize these classes in the monotone setting, and prove a slightly weaker result in the hereditary setting.
\end{abstract}

\newpage

\section{Introduction}

The first-order (FO) model checking problem asks, given a graph $G$ and an FO formula $\phi$, whether $G \models \phi$. The FO model checking problem thus generalizes all problems expressible in first-order logic. Given an $n$-vertex graph $G$ and FO formula $\phi$ using $k$ distinct variables, the well-known naive evaluation algorithm decides $G \models \phi$ in time $\ca O(|\phi| \cdot n^k)$. That is, the FO model checking problem parameterized by the number of variables can be solved in \XP-time. Downey, Fellows, and Taylor \cite{DBLP:conf/dmtcs/DowneyFT96} proved that the FO model checking parameterized by the quantifier rank of the formula is \AW{*}-hard, thus giving rise to the following natural question:
\begin{question}\label{q:fo-mc-param-by-qr}
    For which graph classes $\ca C$ does it hold that FO model checking admits an \FPT-time algorithm when restricted to inputs from $\ca C$ and parameterized by the quantifier rank of the formula?
\end{question}

Over the last few decades, there has been significant progress towards \Cref{q:fo-mc-param-by-qr}. However, the naive evaluation algorithm runs in \XP-time even when parameterized by the number of variables rather than the quantifier rank of the formula. Hence, it seems natural to also ask the following question:
\begin{question}\label{q:fo-mc-param-by-num-var}
    For which graph classes $\ca C$ does it hold that FO model checking admits an \FPT-time algorithm when restricted to inputs from $\ca C$ and parameterized by the number of variables of the formula?
\end{question}

We investigate \Cref{q:fo-mc-param-by-num-var} in the settings where the class $\ca C$ is monotone or hereditary. We fully answer the question for monotone graph classes:

\getkeytheorem{col:algo-tree-depth}
\getkeytheorem{thm:unbounded-tree-depth-hardness}

\Cref{thm:unbounded-tree-depth-hardness} together with \Cref{col:algo-tree-depth} immediately gives the following solution to \Cref{q:fo-mc-param-by-num-var} in the monotone setting:

\begin{retheorem}[store=thm:tree-depth-is-the-limit,label=thm:tree-depth-is-the-limit]
    Assume that \FPT$\neq$\AW{*}. Let $\ca C$ be a monotone class of graphs. Then, FO model checking is \FPT when restricted to inputs from $\ca C$ and parameterized by the number of variables if and only if $\ca C$ has bounded tree-depth.
\end{retheorem}

We conjecture that a statement similar to \Cref{thm:tree-depth-is-the-limit} holds also for hereditary classes with tree-depth replaced by shrub-depth. We prove the existence of an algorithm and a conditional lower bound supporting our conjecture:

\getkeytheorem{col:algo-shrub-depth}
\getkeytheorem{thm:unbounded-shrub-depth-hardness}

We defer the definition of a flipped half-graph and a layerwise flipped $tP_t$ to \Cref{def:half-graph-paths-flips}. Note that, while we do not know how to compute the flip from Item~(\ref{item:algo-flip-of-tPt}), it is known to exist (when Item~(\ref{item:has-flipped-half-graphs}) does not hold): M{\"{a}}hlmann \cite{DBLP:conf/icalp/Mahlmann25} proved that every hereditary class $\ca C$ of unbounded shrub-depth contains a flipped half-graph of order $t$ for all $t\in\NN$ or $\ca C$ contains a layerwise flipped $tP_t$ for all $t\in\NN$.

\subsection*{Outline of the paper}

In \Cref{sec:prelim}, we overview the notation and the relevant concepts from graph theory and finite model theory. The main result of the article is proven in two sections -- the algorithms are described in \Cref{sec:algo}, and the hardness is proven in \Cref{sec:hardness}. Finally, we discuss conclusions and open questions in \Cref{sec:concl}.

\section{Preliminaries}\label{sec:prelim}

We briefly recall basic notation. Given a graph $G$, we denote by $V(G)$ its vertex set and by $E(G)$ its edge set. We also denote by $N_G(u)$ the open neighborhood of a vertex $u \in V(G)$, and by $N_G[u]$ the closed neighborhood. 
Finally, we denote by $G[X]$ the subgraph of $G$ induced by a set $X \subseteq V(G)$.
Let $c \in \NN$ be a natural number. A $c$-colored graph has each vertex colored by exactly one of $c$ colors; usually, we assume the set of colors is $[c]:=\{1,2,\ldots,c\}$. An isomorphism of colored graphs has to respect the colors.
We only consider simple graphs, but most of our results could be extended to other relational structures.

We say that a function $f(\bar a, \bar b, \bar c)$ is \emph{elementary for any fixed $\bar b$} if, for every $\bar b$, the function $g_{\bar b}(\bar a, \bar c):=f(\bar a, \bar b, \bar c)$ is elementary.

\subsection{Half-graphs, paths, and flips}\label{def:half-graph-paths-flips}

Let $G$ be a graph and let $\ca P$ be a partition of a subset $\bigcup \ca P$ of the vertex set of $G$. A \emph{$\ca P$-flip} is a symmetric relation $F$ on $\ca P$. We denote by $F \oplus G$ the following graph: $V(F \oplus G) = V(G)$, and $uv \in E(F \oplus G)$ is decided as follows:
\begin{itemize}
    \item If $\{u,v\} \not\subseteq \bigcup \ca P$, then $uv \in E(F \oplus G)$ if and only if $uv \in E(G)$.
    \item Otherwise, there are $U,V \in \ca P$ such that $u \in U$ and $v \in V$. Then, $uv \in E(F \oplus G)$ if and only if ($uv \in E(G)$ xor $(U,V)\in F$). 
\end{itemize}
In a slight abuse of notation, we sometimes say that a graph $H$ is a \emph{$\ca P$-flip} of $G$ if there is a $\ca P$-flip $F$ such that $H=F \oplus G$. If $\ca P=\{P\}$ has only one part, then we write $P$-flip instead of $\{P\}$-flip.

A \emph{half-graph} $H_t$ of order $t$ is a bipartite graph on $2t$ vertices $a_1, \ldots, a_t$, and $b_1, \ldots, b_t$ such that $a_ib_j \in E(H_t)$ if and only if $i \le j$. We always assume that the bipartition is $A_t=\{a_1, \ldots, a_t\}$ and $B_t=\{b_1, \ldots, b_t\}$. A \emph{flipped half-graph} of order $t$ is any $\{A_t, B_t\}$-flip of $H_t$.

We denote by $kP_t$ the disjoint union of $k$ paths of length $t$. We assume that $V(kP_t) = [k] \times [t]$, and $(a,b)(c,d) \in E(kP_t)$ if and only if $a=c$ and $|b-d|=1$. With each $kP_t$, we associate a layering $\ca L_{k,t}=(L_1, \ldots, L_t)$, where $L_i=\{(a, i) \mid a \in [k]\}$. A \emph{layerwise flipped $kP_t$} is any $\ca L_{k,t}$-flip of $kP_t$.

\subsection{Graph parameters}

\begin{definition}[Elimination forests and tree-depth]
Let $G$ be a graph, and let $T$ be a forest on the same vertex set. We say that $T$ is an \emph{elimination forest} of $G$ if endpoints $u,v$ of every edge $uv$ of $G$ are in ancestor-descendant relation in $T$. The \emph{tree-depth} of $G$ is the minimum height of its elimination forest.
\end{definition}

\begin{definition}[Tree-models and shrub-depth]
Let $G$ be a graph, and let $T$ be a $c$-colored rooted tree such that the set of leaves of $T$ is exactly the vertex set of $G$. We say that $T$ is a \emph{tree-model} of $G$ if adjacency in $G$ of a pair of vertices $u,v \in V(G)$ depends only on their color and distance in $T$.

A graph class $\ca C$ has \emph{shrub-depth} at most $t$ if there is $c \in \NN$ such that every graph $G \in \ca C$ admits a tree-model of height at most $t$ using at most $c$ colors.
\end{definition}

Note that there is no notion of ``shrub-depth of a graph''. Instead, one uses a similar parameter that summarizes both the number of colors and the height of a tree-model into one number:

\begin{definition}[SC-depth]
    The class of graphs $\ca S \ca C_t$ having \emph{SC-depth} at most $t$ is inductively defined as follows:
    \begin{itemize}
        \item $\ca S \ca C_0 = \{K_1\}$ contains only the (up to isomorphism) unique single vertex graph.
        \item Let $G_1, \ldots, G_\ell \in \ca S \ca C_i$, let $H = G_1 \dot\cup \ldots \dot\cup G_\ell$ be then their disjoint union, and let $X \subseteq V(H)$ be any subset of vertices of $H$. Denote by $\overline{H}^X$ the $X$-flip obtained from $H$ by flipping adjacency within $X$. Then, $\overline{H}^X \in \ca S \ca C_{i+1}$.
    \end{itemize}
    \emph{SC-depth} of a graph $G$ is the least integer $t$ such that $G \in \ca S \ca C_t$.
\end{definition}

It is well-known that a graph class $\ca C$ has bounded shrub-depth if and only if $\ca C$ has bounded SC-depth. Furthermore, if $\ca C$ is weakly sparse, that is, $\ca C$ excludes some biclique as a subgraph, then $\ca C$ has bounded shrub-depth if and only if $\ca C$ has bounded tree-depth. 

A notable property of hereditary classes of bounded shrub-depth and monotone classes of bounded tree-depth, proven by Gajarsk\'y and Hlin\v{e}n\'y \cite{DBLP:journals/corr/abs-1204-5194}, is the existence of an FPT-time algorithm for MSO model checking parameterized by the quantifier rank of the formula with elementary dependency on the quantifier rank. Furthermore, the results of Frick and Grohe \cite{DBLP:journals/apal/FrickG04} together with the results of M{\"{a}}hlmann \cite{DBLP:conf/icalp/Mahlmann25} immediately imply that (assuming \classP$\neq$\NP) there is no hereditary class of unbounded shrub-depth nor monotone class of unbounded tree-depth admitting an FPT time algorithm for MSO model checking with elementary dependency on quantifier rank of the formula.

\subsection{First-order logic and pebble games}

We assume that the reader is familiar with FO logic. Briefly introducing, an FO formula over the language of colored graphs can existentially ($\exists x: \phi$) and  universally ($\forall x: \phi$) quantify over vertices of the graph; it can use the usual boolean connectors ($\phi \land \psi$, $\phi \lor \psi$, $\lnot \phi$, $\ldots$); it can express equality ($x = y$) and adjacency ($E(x, y)$) of a pair of variables $x, y$; and it can express a variable  $x$ having color $i$ ($C_i(x)$). A \emph{sentence} is a formula without any free variable. The \emph{quantifier rank} of a formula is the maximum depth of quantifier nesting. It is well-known that there are only finitely many formulas of each quantifier rank up to logical equivalence. The \emph{number of variables} of a formula is the number of distinct variable names. Note that one can reuse variables. Clearly, the quantifier rank of a formula $\phi$ is greater or equal to the number of variables of $\phi$. On the other hand, it is easy to see that three variables suffice to write a formula $\phi_k(x,y)$ expressing ``there is a path of length at most $k$ between $x$ and $y$''. Hence, there is no bound on the number of classes of logically equivalent formulas with three variables, and thus one also can not bound the quantifier rank of a formula $\phi$ in terms of the number of variables of $\phi$. We denote by FO$^s$ the class of all first-order formulas with at most $s$ variables.
Given a formula $\phi$ and a (colored) graph $G$, we say that $G$ is a \emph{model} of $\phi$, written $G \models \phi$, if the formula $\phi$ evaluates to true on $G$.
A \emph{type} $\tp(G)$ of a graph $G$ is the set of all formulas $\phi$ such that $G \models \phi$. An \emph{FO$^s$ type} of $G$ is defined as $\tp^s(G):=\tp(G)\cap$FO$^s$.

We say that two graphs $A$ and $B$ are \emph{elementarily equivalent} if $\tp(A)=\tp(B)$, and we denote this by $A \equiv B$. Similarly, we say that $A$ and $B$ are \emph{FO$^s$ equivalent} if $\tp^s(A)=\tp^s(B)$, and we denote this by $A \equiv^{\text{FO}^s} B$.
In order to prove elementary equivalence or FO$^s$ equivalence, one uses pebble games. The well-known Ehrenfeucht-Fraïssé game is used to prove elementary equivalence. The game is played by two players -- Duplicator and Spoiler on a pair of graphs $A$ and $B$. In each turn, Spoiler picks one of the graphs and places a new pebble on a vertex of that graph, or moves an existing pebble from one vertex to another. Then, the Duplicator places a new pebble or moves an existing pebble in the other graph. The Duplicator must place a new pebble if Spoiler placed a new pebble, and it must move the $i$-th pebble if Spoiler moved the $i$-th pebble. Consider $k$-tuples $\bar a = (a_1, \ldots, a_k)$ of vertices of $A$ and $\bar b=(b_1,\ldots,b_k)$ of vertices of $B$; we say that $\bar a \mapsto \bar b$ is a \emph{partial isomorphism} if the function $f(a_i)=b_i$ is an isomorphism of $A[\{a_1, \ldots, a_k\}]$ and $B[\{b_1, \ldots, b_k\}]$. Let $\bar a$ be the tuple of vertices with pebbles from $A$ (ordered by the time when a pebble was first placed there), and let $\bar b$ be the analogous tuple from $B$. If $\bar a \mapsto \bar b$ is not a partial isomorphism at the end of any turn, then Spoiler wins the Ehrenfeucht-Fraïssé game. Duplicator wins if Spoiler does not win after any round. The graphs $A$ and $B$ are elementarily equivalent if and only if Duplicator has a winning strategy for Ehrenfeucht-Fraïssé on $A$ and $B$.

In order to prove FO$^s$ equivalence, one uses a variation on the Ehrenfeucht-Fraïssé game, where the players can only use $s$ pebbles on each graph (however, they can move each pebble as many times as they want). We refer to this game as the \emph{FO$^s$ pebble game}. Again, it holds that two graphs $A$ and $B$ are FO$^s$ equivalent if and only if the Duplicator wins the FO$^s$ pebble game on $A$ and $B$.

We introduce some notation that will be useful for discussing the FO$^s$ pebble game.
Let $G$ be a graph, and let $\bar v=(v_1, \ldots, v_s)$ be an $s$-tuple of elements of $V(G) \cup \{*\}$, where $* \not\in V(G)$. We understand $\bar v$ as the information about the position of the $s$ pebbles. If $v_i \neq *$, then the $i$-th pebble was placed on a vertex $v_i$, and otherwise the pebble has not been placed anywhere. If $i \in [s]$ is an index and $u$ a vertex, we denote by $\bar v[i \to u]:=(v_1, \ldots, v_{i-1}, u, v_{i+1}, \ldots, v_s)$ the tuple obtained by moving the $i$-th pebble to $u$. If $H$ is an induced subgraph of $G$, we denote by $\bar v[H]$ the tuple obtained from $\bar v$ by replacing all $v_i \not\in V(H)$ with $*$. We write $\Game(A, \bar a, B, \bar b)$ to denote the state of the FO$^s$ pebble game player on graphs $A$ and $B$, where $s$-tuple $\bar a \in (V(A) \cup \{*\})^s$ describes the pebbles placed on $A$, and similarly $\bar b \in (V(B) \cup \{*\})$  describes the pebbles placed on $B$. We often see $\Game(A, \bar a, B, \bar b)$ as a game rather than a state of the FO$^s$ pebble game. We say that a mapping $\bar a \mapsto \bar b$ is an \emph{$s$-partial isomorphism} of $A$ and $B$ if both $\bar a$ and $\bar b$ have $*$ on the same indexes, and the restriction of $\bar a$ and $\bar b$ to non-$*$ elements is a partial isomorphism of $A$ and $B$. Observe that Duplicator has a winning strategy for $\Game(A, \bar a, B, \bar b)$ if and only if $\bar a \mapsto \bar b$ is an $s$-partial isomorphism, and for every move of Spoiler, Duplicator's response results in a game $\Game(A, \bar a', B, \bar b')$ such that $\bar a' \mapsto \bar b'$ is an $s$-partial isomorphism and Duplicator has a winning strategy for $\Game(A, \bar a', B, \bar b')$.

Let $\phi(x)$ be a formula with one free variable, and let $\psi(x,y)$ be a formula with two free variables. Assume that $\psi$ is symmetric and irreflexive, that is, $\psi(x,y)$ is logically equivalent to both $\psi(x, y)\land\psi(y,x)$ and $\psi(x, y)\land x \neq y$. Then, a pair $\iota:=(\phi, \psi)$ is called \emph{FO interpretation}. Let $G$ be a graph. An \emph{$\iota$-interpretation} of $G$ is a graph $\iota(G)$ such that $V(\iota(G)):=\{v \in V(G) : G \models \phi(x)\}$ and $uv \in E(\iota(G))$ if and only if $G \models \psi(u, v)$. A notable property of FO interpretations is the following lemma:

\begin{lemma}[Backwards Translation Lemma]
    Let $\iota$ be an FO interpretation and $\phi$ be an FO sentence. Then, there is an FO sentence $\phi[\iota]$ such that for every graph $G$, it holds that $G \models \phi[\iota]$ if and only if $\iota(G) \models \phi$.
\end{lemma}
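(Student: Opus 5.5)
The Backwards Translation Lemma is proved by structural induction on $\phi$, substituting the interpretation into the sentence. It is convenient to prove a more general statement for formulas with free variables. For every FO formula $\phi(x_1,\ldots,x_k)$ over the graph signature there is a formula $\phi[\iota](x_1,\ldots,x_k)$ such that the following holds for every graph $G$ and all $v_1,\ldots,v_k \in V(\iota(G))$, i.e.\ vertices of $G$ satisfying the domain formula of $\iota$:
\[
G \models \phi[\iota](v_1,\ldots,v_k) \iff \iota(G) \models \phi(v_1,\ldots,v_k).
\]
The statement for sentences is the case $k=0$. Write $\iota=(\delta,\psi)$ to avoid clashing with the input formula $\phi$.

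The induction runs as follows.

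\textbf{Atomic formulas.} Translate $x=y$ to $x=y$, which is valid because $V(\iota(G))\subseteq V(G)$ and equality is inherited. Translate $E(x,y)$ to $\psi(x,y)$, renaming the free variables of $\psi$ appropriately, and bound variables as well to avoid capture. If colored graphs are considered, an interpretation would also need color formulas, but for plain graphs only these two atomic cases arise.

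\textbf{Boolean connectives.} These commute with the translation: $(\lnot\phi)[\iota]=\lnot\phi[\iota]$ and $(\phi\land\phi')[\iota]=\phi[\iota]\land\phi'[\iota]$, and similarly for the other connectives.

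\textbf{Quantifiers.} Here the quantification is relativized to the domain: $(\exists x\,\phi)[\iota]=\exists x\,(\delta(x)\land\phi[\iota])$ and $(\forall x\,\phi)[\iota]=\forall x\,(\delta(x)\to\phi[\iota])$. The inductive hypothesis applies because the witness $x$ ranges exactly over $V(\iota(G))$, and the other free variables are assigned vertices of $\iota(G)$ by assumption.

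Each step is a direct check of the semantics, so there is no real obstacle. The only point needing care is variable hygiene: substituting $\psi$ and $\delta$ must not capture the bound variables of $\phi$, which is handled by renaming bound variables. If the translated formula is also required to use few variables, as is relevant in this paper, one would note that the number of variables of $\phi[\iota]$ is bounded by the number of variables of $\phi$ plus the number of variables used in $\iota$. This bound holds because the variables of $\delta$ and $\psi$ can be reused across occurrences, but it is not needed for the lemma as stated.
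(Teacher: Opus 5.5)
Your proof is correct and is the standard one. The paper does not prove this lemma itself but cites it from Ebbinghaus and Flum, where it is established by exactly this structural induction: replace $E(x,y)$ by $\psi(x,y)$ and relativize every quantifier to the domain formula. Your closing observation that the bound variables of $\psi$ and $\delta$ can be reused across occurrences is precisely what justifies the paper's remark that $\phi[\iota]$ lies in FO$^{s+c}$ with $c$ depending only on $\iota$.
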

We remark that, if $\phi$ is a sentence of FO$^s$, then $\phi[\iota]$ is a sentence of FO$^{s+c}$ for some constant $c$ depending only on $\iota$.

We refer the reader to the book by Ebbinghaus and Flum \cite{finite-model-theory-book} for a more in-depth explanation of the above concepts.

\section{Algorithms}\label{sec:algo}

First, we observe that the classical result by Gajarsk\'y and Hlinen\'y \cite{DBLP:journals/corr/abs-1204-5194} about kernels of bounded depth colored trees extends to FO$^s$ without any bound on quantifier rank:

\begin{lemma}\label{thm:reduce-graph}
    There is are computable functions $g, h: \NN \times \NN \times \NN \to \NN$ such that, for every $s \in \NN$, $k \in \NN$, $c \in \NN$, and every $c$-colored tree $T$ of depth $k$, there is a subtree $L$ of $T$ on at most $g(s, k, c)$ vertices such that $T$ and $L$ are FO$^s$-equivalent. Furthermore, such a tree $L$ can be computed from $T$ in time $\ca O(h(s, k, c) \cdot n)$, and both functions $g$ and $h$ are elementary for any fixed $k$.
\end{lemma}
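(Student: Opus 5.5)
We will follow the Gajarsk\'y--Hlin\v{e}n\'y kernelization, replacing quantifier-rank types by FO$^s$ types of rooted subtrees and using the FO$^s$ pebble game. The key point is that the number of variables $s$ caps how many isomorphic sibling subtrees ever need to be kept. The depth and the number of colors cap how many \emph{kinds} of subtrees there are.

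\textbf{Step 1: classes of rooted subtrees.} We proceed bottom-up by height. For a vertex $v$, let $T_v$ be the subtree rooted at $v$, treated as a colored tree with a marked root. Define an equivalence on such rooted trees of height at most $i$ inductively:
\begin{itemize}
\item two rooted trees are equivalent if their roots have the same color;
\item and, for every class of children, the two multiplicities are either equal or both at least $s$.
\end{itemize}
Let $N(i)$ be the number of classes at height $i$. Then $N(0)\le c$ and $N(i+1)\le c\cdot (s+1)^{N(i)}$. This is a tower of height about $k$, so it is elementary in $s$ and $c$ for fixed $k$.

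\textbf{Step 2: pruning.} Working top-down (or bottom-up), at every vertex keep at most $s$ children from each class and delete the remaining children together with their subtrees. Each pruned subtree stays in the same class, so the result $L$ has at most $g(s,k,c)=\sum_{i\le k}\prod_{j<i}sN(j)$ vertices. This bound is again elementary for fixed $k$.

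The classes can be computed in linear time. Process the tree bottom-up, represent each class by a canonical code, and at each vertex count the children per class, capping the counts at $s$. Dictionary lookups into tables of size bounded by $h(s,k,c)$ give a running time of $\ca O(h\cdot n)$.

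\textbf{Step 3: FO$^s$-equivalence, the main obstacle.} We must show that $T\equiv^{\text{FO}^s}L$. We would prove by induction a stronger statement about the game $\Game(T,\bar a,L,\bar b)$. Duplicator maintains a bijection-like correspondence between occupied branches. At each vertex $v$, the children carrying pebbles in $T$ and in $L$ are matched class-preservingly. This is possible because at most $s$ pebbles exist, so at most $s$ children of one class are ever occupied simultaneously, and $L$ keeps $\min(\text{mult},s)$ copies.

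When Spoiler places a pebble into an unoccupied branch, Duplicator picks a fresh unoccupied branch of the same class in the other tree. When a pebble leaves a branch and that branch becomes empty, Duplicator frees the matching.

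Inside matched branches, Duplicator plays recursively. The subtlety is that pebbles can be moved arbitrarily often without a bound on rank. We therefore state the invariant so that it depends only on the current pebble positions, not on history. The invariant is that the pebbled positions, together with their ancestor paths, induce class-preserving matched branch systems. Then Duplicator can always respond regardless of how many rounds pass. Adjacency and equality are determined by the ancestor relation and colors, so the invariant implies a partial isomorphism.

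We expect to verify this carefully. One subtle point is that a moved pebble must be answered in a branch consistent with the ancestors of the remaining pebbles, which is where per-vertex matchings along root paths are needed.
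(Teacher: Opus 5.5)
Your plan is correct, and the kernel you build is the paper's kernel. Below the root, the paper keeps the first $s$ components $T^r_i$ from each isomorphism class of their recursively reduced kernels $L^r_i$. Those isomorphism classes are exactly your classes: root colour together with child-class multiplicities capped at $s$.

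The difference is in the proof of FO$^s$-equivalence. The paper inducts on depth. It deletes $r$ and recolours its neighbours so that adjacency to $r$ becomes a colour. It then composes Duplicator's winning strategies on matched pairs of components, using from the induction hypothesis only the black-box fact that each component is FO$^s$-equivalent to its kernel. A pebble entering an empty component is answered in a fresh component with an isomorphic kernel.

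You instead keep one positional invariant on the whole tree: a class-preserving isomorphism $\pi$ between the ancestor closures (including the root) of the pebbled vertices. This does go through, as follows.
\begin{itemize}
\item When pebble $o$ is moved to $v$, let $S'$ be the closure of the other pebbles. If $v\in S'$, Duplicator answers $\pi(v)$.
\item Otherwise let $w$ be the deepest vertex of $S'$ on the root-to-$v$ path, let $\hat w$ be its $\pi$-partner, and let $C$ be the class of the child of $w$ towards $v$. If the one of $w,\hat w$ lying in $T$ has $m$ children of class $C$, the one lying in $L$ has exactly $\min(m,s)$ of them.
\item On Spoiler's side, at most $\min(m,s)-1$ class-$C$ children of $w$ meet $S'$. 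There are at most $s-1$ other pebbles, and the child towards $v$ is not in $S'$.
\item The map $\pi$ transfers this count, so Duplicator's side has a class-$C$ child outside $S'$. From there any class-respecting descent works, since nothing below a vertex outside $S'$ lies in $S'$.
\end{itemize}
Write this count out explicitly, discounting the moved pebble. It is the whole content of your Step~3, and your phrase ``at most $s$ children are occupied'' hides it.

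Your route avoids nested sub-strategies and the recolouring trick. It also avoids the fact, used implicitly when the paper re-checks (iii), that lifting a pebble from a won sub-position keeps it won. The paper's route is more modular: it needs nothing about a component beyond its FO$^s$-equivalence with its kernel and the isomorphism type of that kernel.

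One minor slip: your explicit formula for $g$ multiplies the level factors in the wrong order, since the root alone may keep $sN(k-1)$ children. Use the recursion $g(k)\le 1+sN(k-1)\,g(k-1)$ instead, which is still elementary for fixed $k$.
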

\begin{proof}
    The proof proceeds by induction on $k$. If $k=0$, then $T$ has only one vertex, so it suffices to set $g(s, 0, c):=1$ and $L := T$. Hence, we assume that $k \ge 1$.
    We split the proof into two parts. In the first part, we inductively define the subtree $L$ and implicitly give a recursive algorithm for computing it. In the second part of the proof, we show that our construction of $L$ indeed satisfies that $L$ and $T$ are FO$^s$-equivalent.

    \medskip

    \noindent\textbf{Construction.}\,
    Let $r$ be a vertex of $T$ such that every other vertex is at distance at most $k$ from $r$. Without loss of generality, we assume that the vertices of $T$ are colored by colors from set $[c]=\{1,2,\ldots,c\}$, and we denote by $c_v^T$ the color of a vertex $v$ in $T$. Let $T^r$ be the $2c+1$-colored tree obtained from $T$ by giving $r$ a unique color $2c+1$, coloring each neighbor $v \in N_T(r)$ of $r$ by color $c+c_v^T$, and keeping the colors of all the other vertices. Let $T_1^r, \ldots, T_\ell^r$ be the inclusion-maximal subtrees of $T^r-r$. Note that each $T_i^r$ is colored by at most $c+1$ colors from set $[2c]$.
    
    By induction assumption, for each index $i$, there is a $(c+1)$-colored subtree $L^r_i$ of $T^r_i$ on at most $g(s, k-1, c+1)$ vertices such that $T^r_i$ and $L^r_i$ are FO$^s$-equivalent. By Cayley's formula \cite{Cayley_2009}, there are, up to isomorphism, at most $p:=\left(2c \cdot g(s, k-1, c+1)\right)^{g(s, k-1, c+1)}$ distinct ordered colored trees with colors from set $[2c]$ on at most $g(s, k-1, c+1)$ vertices. We set $g(s, k, c):=1 + s \cdot p \cdot g(s, k-1, c+1)$, and we observe that $g$ is elementary for any fixed $k$. Let $I$ be the set of indexes $i$ of subtrees $T_i$ such that there are fewer than $s$ distinct indexes $j < i$ with $L^r_j$ isomorphic to $L^r_i$. We construct $L$ by restricting $T$ to $\{r\}\cup\bigcup_{i \in I} V(L^r_i)$; satisfying $|V(L)|\le g(s, k, c)$.

    We remark that it suffices to check the isomorphism of each tree $L^r_i$ with at most $s \cdot p$ trees $L^r_j$ in order to compute $I$. Hence, $L$ can be constructed from $T$ algorithmically in time $\ca O\left(n +  \left(\sum_{i=1}^\ell h(s, k-1, c+1) \cdot |V(T_i)|\right) + \sum_{i=1}^\ell s \cdot p \cdot g(s, k-1, c+1)\right) \le \ca O(h(s, k, c) \cdot n)$. We observe this expression implicitly defines the function $h$, and that $h$ is elementary for any fixed $k$.

    \medskip

    \noindent\textbf{FO$^s$-equivalence.}\,
    Let $L^r$ be the subtree of $T^r$ induced by $V(L)$.
    We prove that the Duplicator has a winning strategy in $G_0:=\Game(T^r, \underbrace{*\ldots*}_s, L^r, \underbrace{*\ldots*}_s)$, and thus $T^r$ and $L^r$ are FO$^s$-equivalent. This is sufficient to prove the FO$^s$-equivalence of $T$ and $L$, because the colors of $T$ are boolean combinations of the colors of $T^r$.
    Let $W$ be the set of games defined as follows:
    \begin{itemize}
        \item Let $\bar t \in V(T^r) \cup \{*\}$ and $\bar l \in V(L^r) \cup \{*\}$ be $s$-tuples where $\bar t \mapsto \bar l$ is an $s$-partial isomorphism from $T^r$ to $L^r$. We let $\Game(T^r, \bar t, L^r, \bar l) \in W$ if and only if there is an injective partial function $m_{\bar t, \bar l}:\NN\to\NN$ such that the following holds for all indexes $i_1$, $i_2$, $j_T$, $j_L$:
        \begin{enumerate}
            \item[(i)] If $\{t_{i_1}, t_{i_2}\} \subseteq V(T^r_{j_T})$ and $l_{i_1} \in V(L^r_{j_L})$, then $l_{i_2} \in V(L^r_{j_L})$, $m_{\bar t, \bar l}(j_L)=j_T$, and the colored trees $L^r_{j_T}$ and $L^r_{j_L}$ are isomorphic.
            \item[(ii)] If $\{l_{i_1}, l_{i_2}\} \subseteq V(L^r_{j_L})$ and $t_{i_1} \in V(T^r_{j_T})$, then $t_{i_2} \in V(T^r_{j_T})$, $m_{\bar t, \bar l}(j_L)=j_T$, and the colored trees $T^r_{j_T}$ and $T^r_{j_L}$ are isomorphic.
            \item[(iii)] Denote by $\bar t[T^r_{j_T}]$ (resp. $\bar l[L^r_{j_L}]$) the restriction of $\bar t$ (resp. $\bar l$) to $V(T^r_{j_T})$ (resp. $V(L^r_{j_L})$). If $j_T = m_{\bar t, \bar l}(j_L)$, then Duplicator has a winning strategy for $\Game(T^r_{j_T}, \bar t[T^r_{j_T}], L^r_{j_L}, \bar l[L^r_{j_L}])$
        \end{enumerate}
    \end{itemize}
    We remark that, if (i) and (ii) are satisfiable, then there is a unique function $m_{\bar t, \bar l}$ with inclusion-minimal domain satisfying (i) and (ii), so we do not need to write this function explicitly in the upcoming proof in order to argue that a game belongs to $W$.
    
    We find a strategy $S$ for Duplicator such that, for every game ${\Game(T^r, \bar t, L^r, \bar l) \in W}$ and every move of the Spoiler, the Duplicator makes a move resulting in a game ${\Game(T^r, \bar t', L^r, \bar l') \in W}$. Since $G_0 \in W$, such a strategy $S$ is a winning strategy for Duplicator in $G_0$.

    Consider any game $G_1:=\Game(T^r, \bar t, L^r, \bar l) \in W$. Spoiler picks a colored tree $H \in \{T^r,L^r\}$, index $o \in [s]$, and a vertex $v \in V(H)$. The proof proceeds by case analysis.

    \begin{case}$v = r$\end{case}
    Then Duplicator responds by picking $r$. Suppose that $\bar t[o \to r] \mapsto \bar l[o \to r]$ is not an $s$-partial isomorphism. Then, one of the following happens:
    \begin{itemize}
        \item The color of $r$ in $T^r$ is different from the color of $r$ in $L^r$, which is impossible by the definition of $L^r$.
        \item There is $j$ such that ($t_jr \in E(T^r)$ and $l_jr \not\in E(L^r)$) or ($t_jr \not\in E(T^r)$ and $l_jr \in E(L^r)$), hence the colors of $t_j$ and $l_j$ differ, which is not possible because $\bar t \mapsto \bar l$ is an $s$-partial isomorphism. 
    \end{itemize}
    We observe that (i), (ii), and (iii) are trivially satisfied in the resulting game $\Game(T^r, \bar t[o \to r], L^r, \bar l[o \to r])$. Hence, we are done with the case of $v = r$.

    \begin{case}
        $v \neq r$ and $H=T^r$
    \end{case}
    Then, $v \in V(T^r_{j^T})$ for some index $j^T$. 
    
    \begin{case2}
        There exists an index $i \in [s]$ such that $t_i \in V(T^r_{j^T})$
    \end{case2}
    Then we fix any such index $i$, and we set $j^L$ to be the index of the subtree $L^r_{j^L}$ of $L^r$ containing $l_i$. By (i) and (ii), $m(j^L)=j^T$, and by (iii), the Duplicator has winning strategy in $G_2:=\Game(T^r_{j_T}, \bar t[T^r_{j_T}], L^r_{j_L}, \bar l[L^r_{j_L}])$. We play the move $(T^r_{j_T}, o, v)$ in $G_2$ as Spoiler, and the corresponding Duplicator responds with a vertex $u$. Then, we choose the same vertex $u$ for the response of Duplicator of $G_1$. By (iii), $\bar t[o \to v][T^r_{j_T}] \mapsto \bar l[o \to u][L^r_{j_L}]$ is an $s$-partial isomorphism. Furthermore, no vertex of any other subtree is adjacent to $v$ or $u$, and the adjacency of $v$ and $u$ to $r$ depends only on the color of $v$ and $u$. Hence, $\bar t[o \to v] \mapsto \bar l[o \to u]$ is also $s$-partial isomorphism. We again observe that the conditions (i), (ii), and (iii) are satisfied in $\Game(T^r, \bar t[o \to v], L^r, \bar l[o \to u])$.

    \begin{case2}
        There does not exist any index $i \in [s]$ such that $t_i \in V(T^r_{j^T})$
    \end{case2}
    Let $I_T := \{i | L^r_{i} \cong L^r_{j^T}\}$ be the set of all indexes of subtrees $L^r_{i}$ isomorphic to $L^r_{j^T}$, and let $I_L = I_T \cap I$ be its subset containing only indexes of subtrees present in $L^r$. By construction, it holds that ${\min\{|I_T|, s\}=|I_L|}$. Let $I_T^* =\{ i \in I_T | \exists j \neq o : t_j \in V(T^r_{i}) \}$, and let $I_L^* =\{ i \in I_L | \exists j \neq o : l_j \in V(L^r_{i}) \}$. It follows from properties (i) and (ii) that $|I_T^*|=|I_L^*|$. Since $\bar t$ is an $s$-tuple, we get $|I_L^*| = |I_T^*| < s$. Since $j^T \not \in I_T^*$, we get  $|I_L^*| = |I_T^*| < |I_T|$. Altogether, $|I_L^*| < |I_L|$. We choose an arbitrary index $j_L \in I_L \setminus I_L^*$. Since $L^r_{j^T}$ and $L^r_{j^L}$ are isomorphic and $L^r_{j^T}$ and $T^r_{j^T}$ are FO$^s$-equivalent, the Duplicator has a winning strategy for $G_3:=\Game(T^r_{j^T}, *\ldots*, L^r_{j^L}, *\ldots*)$. We play the move $(T^r_{j_T}, o, v)$ in $G_3$ as Spoiler, and the corresponding Duplicator responds with a vertex $u$. Then, we choose the same vertex $u$ for the response of Duplicator of $G_1$. Again, no vertex of any other subtree is adjacent to $v$ or $u$, and the adjacency of $v$ and $u$ to $r$ depends only on the color of $v$ and $u$. Hence, $\bar t[o \to v] \mapsto \bar l[o \to u]$ is an $s$-partial isomorphism, and the conditions (i), (ii), and (iii) are satisfied.

    \begin{case}[store=case:reduce-graph:deferred]\label{case:reduce-graph:deferred}
        $v \neq r$ and $H=L^r$
    \end{case}
    %
    %

            Then, $v \in V(L^r_{j^L})$ for some index $j^L$. 

            \begin{case2}
                There exists an index $i \in [s]$ such that $l_i \in V(L^r_{j^L})$
            \end{case2}

            Then we fix any such index $i$, and we set $j^T$ to be the index of the subtree $T^r_{j^T}$ of $T^r$ containing $t_i$. By (i) and (ii), $m(j^L)=j^T$, and by (iii), the Duplicator has winning strategy in $G_2:=\Game(T^r_{j_T}, \bar t[T^r_{j_T}], L^r_{j_L}, \bar l[L^r_{j_L}])$. We play the move $(L^r_{j_T}, o, v)$ in $G_2$ as Spoiler, and the corresponding Duplicator responds with a vertex $u$. Then, we choose the same vertex $u$ for the response of Duplicator of $G_1$. By (iii), $\bar t[o \to v][T^r_{j_T}] \mapsto \bar l[o \to u][L^r_{j_L}]$ is an $s$-partial isomorphism. Furthermore, no vertex of any other subtree is adjacent to $v$ or $u$, and the adjacency of $v$ and $u$ to $r$ depends only on the color of $v$ and $u$. Hence, $\bar t[o \to v] \mapsto \bar l[o \to u]$ is also $s$-partial isomorphism. We again observe that the conditions (i), (ii), and (iii) are satisfied in $\Game(T^r, \bar t[o \to v], L^r, \bar l[o \to u])$.

            \begin{case2}
                There does not exist any index $i \in [s]$ such that $l_i \in V(L^r_{j^L})$
            \end{case2}

            Let $I_T := \{i | L^r_{i} \cong L^r_{j^L}\}$ be the set of all indexes of subtrees $L^r_{i}$ isomorphic to $L^r_{j^L}$, and let $I_L = I_T \cap I$ be its subset containing only indexes of subtrees present in $L^r$. 
            Let $I_T^* =\{ i \in I_T | \exists j \neq o : t_j \in V(T^r_{i}) \}$, and let $I_L^* =\{ i \in I_L | \exists j \neq o : l_j \in V(L^r_{i}) \}$. It follows from properties (i) and (ii) that $|I_T^*|=|I_L^*|$. 
            Since $j^L \not \in I_L^*$, we get  $|I_T^*| = |I_L^*| < |I_L|$. 
            Since $I_L \subseteq I_T$, we get $|I_L| \le |I_T|$. Altogether, $|I_T^*| < |I_T|$.
            
            We choose an arbitrary index $j_T \in I_T \setminus I_T^*$. Since $L^r_{j^L}$ and $L^r_{j^T}$ are isomorphic and $L^r_{j^L}$ and $T^r_{j^L}$ are FO$^s$-equivalent, the Duplicator has a winning strategy for $G_3:=\Game(T^r_{j^T}, *\ldots*, L^r_{j^L}, *\ldots*)$. We play the move $(L^r_{j_L}, o, v)$ in $G_3$ as Spoiler, and the corresponding Duplicator responds with a vertex $u$. Then, we choose the same vertex $u$ for the response of Duplicator of $G_1$. Again, no vertex of any other subtree is adjacent to $v$ or $u$, and the adjacency of $v$ and $u$ to $r$ depends only on the color of $v$ and $u$. Hence, $\bar t[o \to v] \mapsto \bar l[o \to u]$ is an $s$-partial isomorphism, and the conditions (i), (ii), and (iii) are satisfied.
\end{proof}

Using \Cref{thm:reduce-graph}, we deduce a model-checking algorithm.

\begin{theorem}\label{thm:tree-algo}
    There is a computable function $f: \NN \times \NN \times \NN \to \NN$, and algorithm $\AA$ such that, given an FO formula $\phi$ with $s$ variables and a $c$-colored tree $T$ of depth $k$, $\AA$ decides $T \models \phi$ in time $\ca O(f(s, k, c) \cdot (|T|+|\phi|))$.
\end{theorem}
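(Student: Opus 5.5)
The plan is to combine \Cref{thm:reduce-graph} with the naive evaluation algorithm. Given $\phi$ with $s$ variables and a $c$-colored tree $T$ of depth $k$, we first apply \Cref{thm:reduce-graph} to compute, in time $\ca O(h(s,k,c)\cdot |T|)$, a subtree $L$ of $T$ on at most $g(s,k,c)$ vertices such that $T \equiv^{\text{FO}^s} L$. Since $\phi \in$ FO$^s$, we have $T \models \phi$ if and only if $L \models \phi$ by the definition of FO$^s$ equivalence (equality of FO$^s$ types). We then decide $L \models \phi$ by the naive evaluation algorithm, which runs in time $\ca O(|\phi| \cdot |V(L)|^s) \le \ca O(|\phi|\cdot g(s,k,c)^s)$.

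Summing the two phases gives total running time $\ca O(h(s,k,c)\cdot |T| + g(s,k,c)^s \cdot |\phi|)$. Setting $f(s,k,c) := h(s,k,c) + g(s,k,c)^s$, which is computable because $g$ and $h$ are, this is bounded by $\ca O(f(s,k,c)\cdot(|T|+|\phi|))$, as required. If desired, one also notes that $f$ remains elementary for any fixed $k$, since $g$ and $h$ are, and $x\mapsto x^s$ preserves elementarity.

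The only step needing care is the naive evaluation with reused variables: a formula with $s$ distinct variable names may have large quantifier rank, so the evaluation must avoid recursing over assignments along nesting depth. The standard way is bottom-up over the syntax tree of $\phi$: for each subformula we compute the set of satisfying assignments to the (at most $s$) variables in $V(L)^s$, a table of size $|V(L)|^s$. Each connective costs $\ca O(|V(L)|^s)$. For a quantifier $\exists x_i$, we project out coordinate $i$, which costs $\ca O(|V(L)|^{s}\cdot |V(L)|)$ if done naively, or $\ca O(|V(L)|^s)$ by grouping assignments. Either way the cost per subformula is bounded by a function of $s,k,c$ only, giving the claimed linear dependence on $|\phi|$. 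I expect no real obstacle beyond this bookkeeping; the substance lies entirely in \Cref{thm:reduce-graph}.
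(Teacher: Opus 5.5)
Your proposal is correct and matches the paper's proof: compute the FO$^s$-equivalent subtree on at most $g(s,k,c)$ vertices via \Cref{thm:reduce-graph} in time $\ca O(h(s,k,c)\cdot n)$, then run naive bottom-up evaluation at cost $\ca O(g(s,k,c)^s)$ per subformula, for a total of $\ca O(h(s,k,c)\cdot n + |\phi|\cdot g(s,k,c)^s)$. Your remark on reused variables (tabulate satisfying tuples per subformula rather than recursing along quantifier depth) is what the paper's naive evaluation does implicitly.
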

\begin{proof}
    Let $s \in \NN$, let $T$ be a $c$-colored $n$-vertex graph of depth $k$, and let $\phi$ be a formula of FO$^s$. We use the algorithm of \Cref{thm:reduce-graph} to obtain a $c$-colored tree $T_0 \equiv^{\text{FO}^s} T$  on at most $g(s, k, c)$ vertices in time $\ca O(h(s, k, c) \cdot n)$. Notably, $T \models \phi$ if and only if $T_0 \models \phi$. We denote $V_0 := V(T_0)$ and $n_0 := |V_0| \le g(s, k, c)$.

    Finally, we simply apply the naive evaluation algorithm to decide if $T_0 \models \phi$. The algorithm computes, for each subformula $\psi$ of $\phi$ with $q$ free variables, the set $R_\psi$ of $q$-tuples $\bar v \in V_0^q$ satisfying $T_0 \models \psi(\bar v)$. This takes $\ca O(n_0^q) \le \ca O(g(s, k, c)^s)$ time. Altogether, the running time of our algorithm is $\ca O(h(s, k, c) \cdot n + |\phi| \cdot g(s, k, c)^s)$.
\end{proof}

The following corollaries follow immediately from \Cref{thm:tree-algo}, the fact that both elimination forests and tree-models can be computed in \FPT-time, and the fact that a graph can be obtained by an FO interpretation from a coloring of its elimination forest or from its tree-model, where the formula defining the interpretation depends only on the depth of the elimination forest or the depth and number of colors of the tree-models.

\begin{recorollary}[store=col:algo-tree-depth,label=col:algo-tree-depth]
    There is a computable function $f: \NN \times \NN \to \NN$, and an algorithm $\AA$ such that, given an FO formula $\phi$ with $s$ variables and a graph $G$ of tree-depth $k$, $\AA$ decides $G \models \phi$ in time $\ca O(f(s, k) \cdot (|G|+|\phi|))$.
\end{recorollary}

\begin{corollary}
    There is a computable function $f: \NN \times \NN \to \NN$, and an algorithm $\AA$ such that, given an FO formula $\phi$ with $s$ variables and a graph $G$ of SC-depth $k$, $\AA$ decides $G \models \phi$ in time $\ca O(f(s, k) \cdot (|G|+|\phi|))$.
\end{corollary}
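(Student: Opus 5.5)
The plan is to reduce to \Cref{thm:tree-algo} through a tree-model, following the remark before the corollaries. Given $G$ of SC-depth $k$, we first compute a witnessing SC-depth decomposition. This is a rooted tree $T$ of height at most $k$ whose leaves are $V(G)$. Each internal node $x$ at level $i$ carries the flip set $X_x$, a subset of the leaves below $x$, that was used when forming the graph at $x$. The hope is to obtain it in \FPT-time via the known \FPT-computability of tree-models of bounded shrub-depth, possibly after converting a tree-model of height $k'$ with $c'$ colors (both bounded by a function of $k$) into such a decomposition. The decomposition is the form we then encode.

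Next we encode $T$ as a colored tree. Color each leaf $v$ by the vector $(b_1,\ldots,b_k)\in\{0,1\}^k$, where $b_i=1$ iff $v$ belongs to the flip set of its ancestor at height $i$. Also color internal nodes by their level, to make depth definable. This gives a $c$-colored tree of depth $k$ with $c\le 2^k+k+1$. For two leaves $u,v$, let $i$ be the height of their lowest common ancestor; it is FO-definable from the distance $2i$ in $T$, using at most $3$ extra variables via the standard bounded-distance formulas. Adjacency in $G$ is then the parity of the number of levels $j\ge i$ with $b_j(u)=b_j(v)=1$. This holds because each flip at an ancestor at height $j\ge i$ containing both $u$ and $v$ toggles their adjacency, and the graph below the lowest common ancestor has them in different components. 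Hence $G=\iota_k(T)$ for an FO interpretation $\iota_k$: the domain formula $\phi$ says ``is a leaf'' (color-definable), and the edge formula $\psi$ is a disjunction over $i\le k$ and color pairs. Both depend only on $k$ and use a constant number $c_k$ of variables.

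Finally, apply the Backwards Translation Lemma. Given $\phi\in$ FO$^s$, build $\phi[\iota_k]\in$ FO$^{s+c_k}$ with $|\phi[\iota_k]|\le O_k(|\phi|)$, since each atom is replaced by a fixed formula. Then run the algorithm of \Cref{thm:tree-algo} on $T$ with $s+c_k$ variables, depth $k$ and $c$ colors. Since $|T|\le (k+1)|G|$, the running time is $\ca O(f(s+c_k,k,2^k+k+1)\cdot(|G|+|\phi|))$, which is of the required form.

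The main obstacle is the first step: computing the decomposition in linear \FPT-time, with parameters bounded in $k$. If only a tree-model of height $k'$ with $c'$ colors is available, the encoding adapts directly. Color leaves by their tree-model color, and define adjacency by a formula enumerating the finitely many (color, color, distance) triples. The reduction then goes through unchanged, with all constants depending on $k$ alone. The linear-time computability itself would be taken from the literature on shrub-depth and SC-depth.
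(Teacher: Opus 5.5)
Your proposal is correct and takes essentially the same route as the paper, which derives this corollary in one line from \Cref{thm:tree-algo}: it uses that tree-models can be computed in \FPT-time, and that $G$ is FO-interpretable from its colored tree-model with an interpretation depending only on depth and number of colors, after which the Backwards Translation Lemma applies. Your explicit parity encoding of the SC-depth decomposition, the variable count $s+c_k$, and the bound $|T|\le (k+1)|G|$ fill in details the paper leaves implicit; like the paper, you take the efficient computation of the decomposition from the literature rather than proving it.
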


\begin{recorollary}[store=col:algo-shrub-depth,label=col:algo-shrub-depth]
    For every graph class $\ca C$ of bounded shrub-depth and every integer $s \in \NN$, there is a constant $c \in \NN$ and an algorithm $\AA$ such that, given an FO formula $\phi$ with $s$ variables and a graph $G \in \ca C$, $\AA$ decides $G \models \phi$ in time $\ca O(c \cdot (|G|+|\phi|))$.
\end{recorollary}

\section{Hardness}\label{sec:hardness}

We consider both the monotone and hereditary settings, and we investigate which graph classes admit FO model checking in \FPT-time when parameterized by the number of variables rather than quantifier rank. In the monotone setting, the following theorem extends \Cref{col:algo-tree-depth} to a full characterization of these classes.

\getkeytheorem{thm:unbounded-tree-depth-hardness}

In the hereditary setting, we provide strong evidence suggesting that classes of bounded shrub-depth are the ``limit'' of this parameterization of FO model checking. The main difference from the monotone setting is that, in the case of a monotone class $\ca C$ of unbounded tree-depth, it is easy to construct a graph $P \in \ca C$ of arbitrarily large tree-depth. Indeed, it suffices to take a sufficiently long path $P$. M{\"{a}}hlmann \cite{DBLP:conf/icalp/Mahlmann25} proved that hereditary class $\ca C$ has unbounded shrub-depth if and only if $\ca C$ contains layerwise flipped $tP_t$ for all $t \in \mathbb{N}$. So far, we do not know how to algorithmically construct a layerwise flip of $tP_t$ belonging to $\ca C$. However, we are still able to prove the following conditional result.

\getkeytheorem{thm:unbounded-shrub-depth-hardness}

In both settings, we reduce to paths.

\getkeytheorem{thm:paths-hard}

Before we prove \Cref{thm:paths-hard}, we prove two simple lemmas.

\begin{lemma}\label{lem:distance-formula}
    For every $k \in \mathbb{N}$, there is a formula $\xi_k(x_1, x_2)$ of length $\ca O(k)$ using only variables $x_1$, $x_2$, $x_3$, and $x_4$ such that, for every path $P$ and its vertices $u,v \in V(P)$, it holds that $P \models \xi_k(u,v)$ if and only if the distance in $P$ between $u$ and $v$ is exactly $k$.

    Furthermore, given $k$, the formula $\xi_k$ can be computed in time $\ca O(k)$.
\end{lemma}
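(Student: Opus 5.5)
The plan is to first build a three-variable formula $\psi_k(x_1,x_2)$ expressing ``the distance between $x_1$ and $x_2$ is at most $k$'', and then set
\[
\xi_k(x_1,x_2) := \psi_k(x_1,x_2) \land \lnot \psi_{k-1}(x_1,x_2),
\]
with $\psi_{-1}$ taken to be false. This uses only $x_1,x_2,x_3$, so it is within the allowed four variables. Since ``distance at most $k$ but not at most $k-1$'' is exactly ``distance $k$'' in any graph, this works for paths in particular. The path structure is not really needed; the fourth variable is simply slack.

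The key step is to define $\psi_k$ so that $\psi_{k-1}$ occurs only once inside $\psi_k$; otherwise the length would grow exponentially. Set $\psi_0(x_1,x_2) := (x_1 = x_2)$. For $k \ge 1$, move one step from $x_1$ into $x_3$, copy $x_3$ back into the freed variable $x_1$, and recurse:
\[
\psi_k(x_1,x_2) := \exists x_3 \Bigl( \bigl(x_1 = x_3 \lor E(x_1,x_3)\bigr) \land \exists x_1 \bigl( x_1 = x_3 \land \psi_{k-1}(x_1,x_2) \bigr) \Bigr).
\]
The free variables of $\psi_k$ are exactly $x_1,x_2$. By induction on $k$, $G \models \psi_k(u,v)$ if and only if there is a walk of length at most $k$ from $u$ to $v$, where staying in place is allowed at each step. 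This holds if and only if $\mathrm{dist}(u,v)\le k$. The inductive step needs only that, under the assignment $x_1\mapsto u$, $x_2 \mapsto v$, $x_3 \mapsto w$, the inner requantification $\exists x_1 (x_1 = x_3 \land \cdot)$ evaluates $\psi_{k-1}$ at $(w,v)$.

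Each level adds a constant number of symbols and contains one copy of $\psi_{k-1}$, so $|\psi_k| = |\psi_{k-1}| + \ca O(1) = \ca O(k)$. Hence $|\xi_k| = \ca O(k)$. For the algorithmic claim, I would write $\xi_k$ out directly as a string: $k$ copies of the constant-size prefix, then $x_1 = x_2$, then $2k$ closing brackets, and the same for $\psi_{k-1}$. This avoids repeated substitution and gives $\ca O(k)$ time.

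The only point needing real care, and so the main obstacle, is the variable reuse. I must check that rebinding $x_1$ inside $\psi_k$ does not capture any occurrence that should refer to the outer $x_1$. It does not: after the rebinding, the outer value of $x_1$ is never used again, and $x_2$ is never rebound. This is the standard argument and should go through routinely.
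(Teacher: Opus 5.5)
Your proof is correct, but it takes a different route from the paper's. The paper never passes through ``distance at most $k$''. It builds, by a recursion on $k$ that cyclically permutes the roles of the variables, formulas $\chi_k$ asserting that there is a \emph{non-backtracking} walk $w_0,\dots,w_k$ (that is, $w_{i-1}\neq w_{i+1}$) of length exactly $k$ from $x_1$ to $x_2$. It then sets $\xi_k:=\exists x_3\,\chi_k(x_1,x_3,x_2)$. Forbidding backtracking means remembering the previous vertex in addition to the current vertex, the next vertex and the target, which is why four variables are used. Correctness also genuinely relies on the host being a path: there, as in any forest, a non-backtracking walk is a geodesic, so such a walk of length $k$ exists iff the distance is exactly $k$. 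Your construction is $\psi_k\land\lnot\psi_{k-1}$, built from the three-variable ``distance at most $k$'' formula; it is essentially the remark in the preliminaries made precise. It is simpler, uses only three variables, and is correct in every graph rather than just on paths. What the paper's construction buys is an existential formula, with negation only on atoms, that expresses exact distance in a single pass. Yours is instead a Boolean combination of an existential formula and a negated one. For the later uses in the paper (hard-coding the adjacency of $G$ and the reduction from arbitrary FO sentences to paths) this difference is irrelevant. Those uses only need the semantics on paths, the bounded number of variables, and the $\ca O(k)$ length and construction time, all of which your version provides. One cosmetic point: for $k=0$, simply set $\xi_0:=(x_1=x_2)$ instead of relying on a constant for ``false''.
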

\begin{proof}
    For every $k \ge 1$ and every permutation $\sigma:[4] \to [4]$, we construct a formula $\chi^\sigma_k(x_{\sigma(1)}, x_{\sigma(2)}, x_{\sigma(3)})$ using only variables $x_1$, $x_2$, $x_3$, and $x_4$ such that, for every graph $G$ and its vertices $a, b, c \in V(G)$, it holds that $G \models \chi^{\sigma}_k(a, b, c)$ if and only if there is a walk $(w_0=a, w_1=b, w_2, \ldots, w_{k-1}, w_k=c)$ in $G$ such that, for every $0 < i < k$ it holds that $w_{i-1} \neq w_{i+1}$.

    Let $\omega: [4] \to [4]$ be the permutation given by $\omega(1)=2$, $\omega(2)=4$, $\omega(3)=3$, and $\omega(4)=1$. We consider the following inductively defined formulas:
    \begin{align*}
    \chi^\sigma_1 &\equiv E(x_{\sigma(1)}, x_{\sigma(2)}) \land x_{\sigma(2)} = x_{\sigma(3)} \\
    \chi^\sigma_{k+1} &\equiv E(x_{\sigma(1)}, x_{\sigma(2)}) \land \exists x_{\sigma(4)} : x_{\sigma(1)} \neq x_{\sigma(4)} \land E(x_{\sigma(2)}, x_{\sigma(4)}) \land \chi^{\sigma\circ\omega}_{k}
    \end{align*}

    We observe that the formula $\chi^\sigma_k$ exactly encodes the above-defined requirements. If $G$ is a path, then any walk satisfying the above-defined requirements is also a path, so we finish the proof by setting $\xi_0 \equiv x_1=x_2$ and $\xi_k \equiv \exists x_3 : \chi_k^{\rho}$ where $\rho(1)=1$, $\rho(2)=3$, $\rho(3)=2$, and $\rho(4)=4$.
\end{proof}

\begin{lemma}\label{lem:hardcode-graph}
    Let $G$ be an $n$-vertex graph. Then, there is a formula $\epsilon^G(x_1, x_2, x_3)$ of length $\ca O(n^3)$ using only variables $x_1$, $x_2$, $x_3$, and $x_4$ such that, for every $n$-vertex path $P$ and its endpoint $p \in V(P)$, there is a bijection $f: V(G) \to V(P)$ such that, for every pair of vertices $u,v \in V(G)$, it holds that $P \models \epsilon^G(p, f(u), f(v))$ if and only if $uv \in E(G)$.

    Furthermore, given $G$, the formula $\epsilon^G$ can be computed in time $\ca O(n^3)$.
\end{lemma}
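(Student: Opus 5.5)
The plan is to identify the vertices of $G$ with the positions along the path. Fix an arbitrary enumeration $v_1,\ldots,v_n$ of $V(G)$. Given an $n$-vertex path $P$ with endpoint $p$, its vertices are uniquely determined by their distance from $p$, which ranges over $0,\ldots,n-1$. We therefore define $f(v_i)$ to be the unique vertex of $P$ at distance $i-1$ from $p$; this is clearly a bijection. Under this encoding, $uv\in E(G)$ becomes the statement ``the pair of distances $(d(p,x_2), d(p,x_3))$ lies in the set $\{(i-1,j-1) : v_iv_j\in E(G)\}$.''

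We then set
\[
\epsilon^G(x_1,x_2,x_3) \equiv \bigvee_{v_iv_j \in E(G)} \bigl(\xi_{i-1}(x_1,x_2) \land \xi'_{j-1}(x_1,x_3)\bigr),
\]
where $\xi_k$ is the formula of \Cref{lem:distance-formula} and $\xi'_k(x_1,x_3)$ is the same formula with the roles of $x_2$ and $x_3$ exchanged. Because the edge relation is symmetric, we include both ordered pairs $(i,j)$ and $(j,i)$ for every edge. The formula $\xi'_k$ is obtained by renaming variables via the transposition swapping $x_2$ and $x_3$. It therefore still uses only $x_1,\ldots,x_4$, and in it the free variables are $x_1,x_3$ while $x_2$ is bound internally; renaming bound variables does not change its meaning. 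By construction of $\xi_k$ (it quantifies $x_3$ and $x_4$ internally while keeping $x_1,x_2$ free), and by the renaming, each conjunct has the correct free variables. Correctness then follows: $P\models \epsilon^G(p,f(v_a),f(v_b))$ iff some disjunct holds, iff there is an edge $v_iv_j$ with $d(p,f(v_a))=i-1$ and $d(p,f(v_b))=j-1$. Since distances from $p$ determine the vertex, this holds iff $i=a$ and $j=b$ for some edge $v_iv_j$, i.e.\ iff $v_av_b\in E(G)$.

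For the length bound, there are $O(n^2)$ disjuncts, and each has length $O(n)$ by \Cref{lem:distance-formula}, giving $O(n^3)$ total. The computation enumerates the pairs and writes each $\xi_k$ in $O(k)=O(n)$ time, so the running time is also $O(n^3)$.

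The only point needing care is the variable bookkeeping. One must check that the substituted copy $\xi'_k$ stays within the four-variable budget, and that in the conjunction the internally quantified variables do not clash with the free ones. The internal quantifiers rebind $x_3/x_4$ or $x_2/x_4$, which shadows rather than conflicts, so the conjunction is fine. If $G$ has no edges, we take $\epsilon^G \equiv \lnot(x_1=x_1)$.
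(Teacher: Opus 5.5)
Your proof is correct and uses the same construction as the paper: a disjunction over (ordered) edges $v_iv_j$ of conjunctions of two distance formulas from \Cref{lem:distance-formula}, measured from the endpoint $x_1$, with $v_i$ identified by its distance from $p$. Your indexing by distance $i-1$ and your explicit renaming $x_2\leftrightarrow x_3$ in the second conjunct are more careful than the paper's write-up, which pairs $\xi_i$ with $f(v_i)=p_i$ even though $p_i$ lies at distance $i-1$ from $p=p_1$.
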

\begin{proof}
    Let $v_1, \ldots, v_n$ be an arbitrary ordering of the vertices of $G$.
    For each $k \in \mathbb{N}$, let $\xi_k(x_1, x_2)$ be the formula from \Cref{lem:distance-formula}. Consider the following formula.
\[\epsilon_G \equiv \bigvee\limits_{\substack{1 \le i,j \le n \\ v_iv_j \in E(G)}} \left(\xi_i(x_1, x_2) \land \xi_j(x_1, x_3)\right)\]

    Informally speaking, $\epsilon_G$ hardcodes the adjacency relation of $G$, and it encodes each vertex $v_i$ of $G$ as the unique vertex at distance $i$ from $x_1$ in an $n$-vertex path with endpoint $x_1$.

    Let $P$ be an arbitrary $n$-vertex path with endpoint $p$, and let $p_1=p, p_2, \ldots, p_n$ be the natural ordering of the vertices of $P$ along the path. Observe that the bijection $f(v_i)=p_i$ has the desired property, that is, for every pair of vertices $u,v \in V(G)$, it holds that $P \models \epsilon^G(p, f(u), f(v))$ if and only if $uv \in E(G)$.
\end{proof}

We are now ready to prove \Cref{thm:paths-hard} by reduction to the following classical result.

\begin{theorem}[Downey, Fellows, and Taylor \cite{DBLP:conf/dmtcs/DowneyFT96}]\label{thm:all-graph-hard}
    The FO model checking problem parameterized by the quantifier rank and restricted to the class of all simple graphs is \AW{*}-hard.
\end{theorem}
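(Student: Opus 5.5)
The plan is to derive the theorem from the standard machine-independent characterization of \AW{*}: the problem p-MC(FO), parameterized by the length of the sentence and restricted to finite relational structures over an arbitrary vocabulary that is part of the input, is \AW{*}-complete (Downey, Fellows, and Taylor; see also Flum and Grohe). The hardness transfers to the quantifier-rank parameterization, because the quantifier rank of $\phi$ is at most $|\phi|$. So it suffices to give an fpt-reduction from p-MC(FO) on arbitrary structures to p-MC(FO) on simple graphs. The reduction must map $(\AA,\phi)$ to $(G_\AA,\phi')$ with $|\phi'|$, and hence the quantifier rank of $\phi'$, bounded by a computable function of $|\phi|$. Without loss of generality, only the relation symbols occurring in $\phi$ matter, so there are at most $|\phi|$ of them and each has arity at most $|\phi|$.

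The construction of $G_\AA$ is the usual incidence encoding. For each element $a$ of $\AA$ we create an element vertex $v_a$. For each relation symbol $R_j$ of arity $r$ and each tuple $\bar a\in R_j^\AA$, we create a tuple vertex $t$ together with position vertices $p_1,\ldots,p_r$, where $p_i$ is adjacent to $t$ and to $v_{a_i}$. To make the roles of vertices recognizable, we attach to every vertex a marker gadget: a set of fresh pendant leaves whose number encodes the vertex's kind. The kinds are ``element'', ``tuple of $R_j$'', and ``position $i$''. The leaf counts are chosen pairwise distinct, at least $2$, and bounded by $O(|\phi|^2)$, so that the pendant leaves are themselves recognizable as vertices of degree $1$ and are distinguished from all other vertices. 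The graph $G_\AA$ has size polynomial in $|\AA|\cdot|\phi|$ and is computable in polynomial time.

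For the formula, each kind predicate ``$x$ has exactly $m$ pendant leaves'' is expressible by a first-order formula of quantifier rank $O(m)$. We then write $\phi'$ by relativizing every quantifier of $\phi$ to element vertices. Each atom $R_j(x_1,\ldots,x_r)$ is replaced by
\[\exists t\, \exists y_1\ldots\exists y_r:\ \mathrm{Tup}_j(t)\land\bigwedge_{i}\bigl(\mathrm{Pos}_i(y_i)\land E(t,y_i)\land E(y_i,x_i)\bigr),\]
and equality is kept as equality. One then checks that $\AA\models\phi$ iff $G_\AA\models\phi'$, and that $|\phi'|\le f(|\phi|)$ for a computable $f$.

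I expect the main obstacle to be the correctness check, namely ensuring that the marker gadgets cannot be confused with the rest of the encoding. The leaves and gadget vertices must never satisfy the kind predicates spuriously, and a tuple vertex must have exactly one position-$i$ neighbour for each $i$. I would handle this by making all leaf counts distinct and larger than the number of non-leaf neighbours any vertex can have among leaves. A simpler alternative is to make each kind predicate also require ``$x$ is not itself a leaf''. Once that is in place, the equivalence follows by a straightforward induction on $\phi$.
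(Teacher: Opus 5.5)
Your proposal is correct, but the paper contains no proof of this statement to compare it with. It quotes the theorem as a black box from Downey, Fellows, and Taylor and uses it only in the reduction to paths. So your argument is a genuinely different, more self-contained route.

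You take the arbitrary-vocabulary problem p-MC(FO), parameterized by $|\phi|$, as the base. Its $\mathsf{AW[*]}$-completeness is the Downey--Fellows--Taylor result, and in Flum--Grohe it is the definition of $\mathsf{AW[*]}$. You then transfer hardness to simple graphs via an incidence encoding with pendant-leaf markers. The details check out:
\begin{itemize}
\item Every non-leaf vertex gets at least two marker leaves, so the degree-$1$ vertices are exactly the markers. Hence ``$x$ has exactly $m$ degree-$1$ neighbours'' identifies kinds exactly.
\item A tuple vertex of $R_j$ has exactly one position-$i$ neighbour for each $i$, and that vertex has exactly one element neighbour, $v_{a_i}$. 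So your replacement of $R_j(x_1,\ldots,x_r)$ is correct once all quantifiers are relativized to element vertices.
\item There are only $O(|\phi|)$ kinds, so $|\phi'|$ is bounded by a computable function of $|\phi|$.
\end{itemize}
Your extra condition about leaf counts exceeding the number of non-leaf neighbours is unnecessary; the degree-$1$ test already suffices.

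It is worth seeing why you must start from the length parameterization. Your reduction raises the quantifier rank by $O(|\phi|)$, through the marker predicates and the $r+1$ extra quantifiers per atom. This is not bounded by any function of $\mathrm{qr}(\phi)$, since the number of relation symbols and their arities are not. So the reduction is not qr-to-qr, and applying $\mathrm{qr}\le|\phi|$ only at the end is what makes it work.

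The citation buys the paper brevity. Your argument buys an explicit justification that restricting to simple graphs loses nothing, a step the paper leaves inside its attribution.
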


\begin{retheorem}[store=thm:paths-hard,label=thm:paths-hard]
    The FO model checking problem parameterized by the number of variables and restricted to the class of all paths is \AW{*}-hard.
\end{retheorem}
\begin{proof}
    Let $G$ be an $n$-vertex graph and let $\phi$ be an FO sentence with quantifier rank $q$. Assume that $n \ge 3$.
    Without loss of generality, we assume that $\phi$ only uses the variables $x_2, x_3, \ldots, x_{q+1}$; otherwise, we could simply rename the variables, obtaining a logically equivalent sentence. 
    
    Let $\epsilon_G$ be the formula from \Cref{lem:hardcode-graph} which (informally speaking) hardcodes the adjacency matrix of $G$. Given $i,j \in \{2,3, \ldots, q+1\}$, we denote by $\epsilon_G^{i,j}$ the formula obtained from $\epsilon_G$ by simultaneously replacing $x_2$ by $x_i$, $x_3$ by $x_j$, and $x_4$ by $x_s$ where $s = \min\left(\{2,3,4\}\setminus\{i, j\}\right)$.
    Let $\psi'_G(x_1)$ be a formula obtained from $\phi$ by replacing each atom $E(x_i, x_j)$ by the formula $\epsilon_G^{i,j}$, and denote $\psi_G:=\exists x_1 : \psi'_G \land \exists x_2 \forall x_3 : E(x_1, x_3) \rightarrow x_2=x_3$.

    Let $P$ be an $n$-vertex path. Let $p$ be an endpoint of $P$, and let $f: V(G) \to V(P)$ be a bijection (from \Cref{lem:hardcode-graph}) such that, for all pair of vertices $u,v \in V(G)$, it holds that 
    \begin{itemize}
        \item[(a)] $uv \in E(G)$ if and only if $P \models \epsilon_G(p, f(u), f(v))$.
    \end{itemize}

    \begin{reclaim}[store=claim:parainterpr,label=claim:parainterpr]
        $G \models \phi$ if and only if $P \models \psi'_G(p)$
    \end{reclaim}
    \begin{subproof}
    We show that $G \models \phi$ if and only if $P \models \psi'_G(p)$ by structural induction on $\phi$.
            
            Let $\xi$ be any subformula of $\phi$, let $\xi_G$ be the corresponding subformula of $\psi_G$, let $c: \textsf{free}(\phi) \to V(G)$ be any variable assignment, where $\textsf{free}(\phi)$ denotes the set of free variables of $\phi$. Denote by $c^p$ the variable assignment defined by $c^p(x_1)=p$ and $c^p(x_i)=f(c(x_i))$ for $x_i \in \textsf{free}(\phi) \subseteq \{x_2, \ldots, x_{q+1}\}$. We show that $G, c \models \xi$ if and only if $P, c^p \models \xi_G$.\
            \begin{itemize}
                \item If $\xi$ is of the form $E(x_i, x_j)$, then $\xi_G$ is $\epsilon_G^{i,j}(x_1, x_i, x_j)$, so the equivalence follows from (a).
                \item If $\xi$ is of the form $x_i = x_j$, then $\xi_G = \xi$, and the equivalence follows from bijectivity of $f$.
                \item If $\xi$ is of the form $\exists x_i : \chi$, then $\xi_G$ is of the form $\exists x_i \chi_G$, and the equivalence follows from application of the induction assumption to $\chi$ and $\chi_G$ for each variable assignment obtained from $c$ by varying $x_i$ over all vertices of $G$.
                \item If $\xi$ is a boolean combination of subformulas, then the equivalence immediately follows from induction. 
            \end{itemize}
            Hence, we know that $G \models \phi$ if and only if $P \models \psi'_G(p)$.
    \end{subproof}

    If $G \models \phi$ is true, then $P \models \psi_G$ is also true, since we can assign $p$ to $x_1$, satisfying both $\exists x_2 \forall x_3 : E(x_1, x_3) \rightarrow x_2=x_3$ and $\psi'_G$. 
    
    If $P \models \psi_G$ is true, then one of the endpoints of $P$ must be assigned to $x_1$. Since there is an isomorphism of $P$ swapping the endpoints of $P$, we assume that $p$ is the endpoint assigned to $x_1$. We immediately obtain $G \models \phi$, because we know that $P \models \psi'_G(p)$.

    Altogether, $G \models \phi$ if and only if $P \models \psi_G$. Hence, the mapping $(G, \phi) \mapsto (P, \psi_G)$ is the desired reduction.
\end{proof}

We are now ready to prove Theorems~\ref{thm:unbounded-tree-depth-hardness}~and~\ref{thm:unbounded-shrub-depth-hardness}.

\begin{retheorem}[store=thm:unbounded-tree-depth-hardness,label=thm:unbounded-tree-depth-hardness]
    Let $\ca C$ be a monotone class of unbounded tree-depth. Then, the \textsc{FO model-checking} problem is \AW{*}-hard when restricted to $\ca C$ and parameterized by the number of variables of the formula.
\end{retheorem}
\begin{proof}
    The class $\ca C$ has unbounded tree-depth, so $\ca C$ contains all paths as subgraphs of graphs from $\ca C$. It follows from monotonicity that the class of all paths is a subclass of $\ca C$. Hence, \AW{*}-hardness follows from \Cref{thm:paths-hard}.
\end{proof}

In the hereditary setting, we use the following result of M{\"{a}}hlmann \cite{DBLP:conf/icalp/Mahlmann25}. 

\begin{theorem}[M{\"{a}}hlmann \cite{DBLP:conf/icalp/Mahlmann25}]\label{thm:unbounded-shrub-depth-interprets-paths}
    For every hereditary graph class $\ca C$, the following are equivalent.
    \begin{itemize}
        \item $\ca C$ has bounded shrub-depth.
        \item There is $t \in \NN$ such that $\ca C$ excludes all flipped half-graphs of order $t$ and all layerwise flipped $tP_t$.
        \item There is $t \in \NN$ such that $\ca C$ excludes all flipped half-graphs of order $t$ and all layerwise flipped $3P_t$.
        \item $\ca C$ does not 1-dimensionally FO-interpret the class of all paths.
    \end{itemize}
\end{theorem}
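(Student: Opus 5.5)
Since this is Mählmann's result, I would prove it as a cycle of implications, (1)$\Rightarrow$(4)$\Rightarrow$(3)$\Rightarrow$(2)$\Rightarrow$(1), with all the real work in the last step.

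\textbf{(1)$\Rightarrow$(4).} I would use two facts. First, bounded shrub-depth is preserved under FO interpretations (and under the colourings used in 1-dimensional interpretations). Second, paths have unbounded shrub-depth. For the second fact, paths form a weakly sparse class of unbounded tree-depth, and for weakly sparse classes bounded shrub-depth coincides with bounded tree-depth, as noted in the preliminaries.

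\textbf{(4)$\Rightarrow$(3).} I would argue by contraposition: suppose that for every $t$ the class $\ca C$ contains a flipped half-graph of order $t$ or a layerwise flipped $3P_t$. By passing to a subsequence, one of the two families occurs for infinitely many $t$.
\begin{itemize}
\item \emph{Half-graphs.} The flip is determined by the two parts $A_t,B_t$. Colouring these parts lets an interpretation undo the flip. The order $a_i<a_j$ is then definable by inclusion of neighbourhoods in $B_t$. The successor relation is definable as ``no $a_k$ strictly between''. This gives an interpretation of a path on $A_t$.
\item \emph{Layerwise flipped $3P_t$.} The flip relation $F$ on the $t$ layers may be arbitrary. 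I would first apply Ramsey's theorem to pass to a long interval-like subsequence of layers on which $F$ is homogeneous: the diagonal is constant, pairs at each bounded distance are constant, and far pairs are constant. Heredity then lets us keep only these layers. The layers can then be recognised with boundedly many colours, for instance the layer index mod $3$ together with a flag for each of the three paths. This lets the interpretation undo the flip and read off the path edges.
\end{itemize}
The Ramsey step needs care, because deleting layers also breaks the paths. I would therefore instead try to keep \emph{consecutive} layer blocks and use the third path as a reference to recover layer membership definably. This is exactly why $3P_t$ rather than $1P_t$ appears in (3).

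\textbf{(3)$\Rightarrow$(2).} This is immediate. A layerwise flipped $3P_t$ is an induced subgraph of a layerwise flipped $tP_t$, obtained by restricting to three paths with the induced flip. So if $\ca C$ is hereditary and excludes every layerwise flipped $3P_t$, it also excludes every layerwise flipped $tP_t$.

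\textbf{(2)$\Rightarrow$(1).} This is the main obstacle. I would use a game characterisation of shrub-depth: a class has bounded shrub-depth iff Flipper wins the radius-$\infty$ flipper game in a bounded number of rounds. In each round Flipper applies a bounded-size flip, and Connector then restricts play to one connected component.
\begin{itemize}
\item Suppose that Connector survives many rounds on some $G\in\ca C$.
\item Each round witnesses a long induced path in the current flipped graph. Chaining these witnesses across rounds, and repeatedly applying Ramsey to the bounded number of flip classes, should produce a large structure whose vertices are organised in layers with homogeneous inter-layer adjacency.
\item A dichotomy then finishes the argument. If some inter-layer pattern is ``order-like'', it yields flipped half-graphs of large order. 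Otherwise the adjacency between layers is, up to a layerwise flip, a perfect matching between consecutive layers, which yields a layerwise flipped $tP_t$.
\end{itemize}
The hard part is making this extraction rigorous: controlling the flips across rounds, and keeping the layers induced after the Ramsey cleanups. This is where I expect most of the technical effort to go.
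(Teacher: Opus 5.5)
The paper does not prove this theorem; it quotes it from M{\"a}hlmann, so your proposal has to stand on its own. The routine implications are fine: (1)$\Rightarrow$(4) via closure of bounded shrub-depth under FO transductions, (3)$\Rightarrow$(2) by restricting a layerwise flipped $tP_t$ to three paths, and the half-graph case of (4)$\Rightarrow$(3). For that case, note that interpretations here carry no colours. Instead of colouring $A_t,B_t$, fix a flip type occurring infinitely often and define the sides, passing by heredity to an induced subgraph without twins such as $a_t,b_t$; this is routine.

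The genuine gap is (2)$\Rightarrow$(1), which is the whole content of the theorem. The one concrete step you state there is false: ``each round witnesses a long induced path in the current flipped graph''. The half-graph $H_t$ is connected and $2K_2$-free, so it has no induced $P_5$, yet half-graphs have unbounded shrub-depth. If Flipper only plays the trivial flip, Connector survives forever and no long induced path ever appears. You would therefore need a specific Flipper strategy, and a proof that surviving against it yields either a large flipped half-graph or a long induced path in a flip of $G$ whose number of parts $K$ does not grow with the path length. The second outcome is what actually produces $tP_t$. Take an induced path on roughly $t^2K^t$ vertices in a $K$-flip of $G$ and cut it into well-separated segments of $t$ vertices. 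By pigeonhole, $t$ of these segments have the same sequence of flip parts, and they induce in $G$ a layerwise flipped $tP_t$. So the $tP_t$ outcome reduces to finding one long induced path in a bounded-size flip; no separate inter-layer dichotomy is needed. Producing such a path with $K$ bounded is the heart of the matter (naively composing radius-$r$ Flipper strategies lets $K$ grow with $r$), and the proposal leaves it open.

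The second gap is (4)$\Rightarrow$(3) for layerwise flipped $3P_t$, which you leave unresolved, and it is not a technicality. The analogous claim for $1P_t$ is false: a layerwise flip of $1P_t$ can be any graph on $t$ vertices, e.g.\ $K_t$, and cliques do not interpret paths. So the argument must use several paths and must work for an arbitrary flip relation on the $t$ layers. Such a relation is in general homogeneous on no long interval of layers (take it random). Hence the Ramsey step fails, as you note, and ``keep consecutive blocks'' does not repair it. A usable handle is that a vertex of $L_i$ and a vertex of $L_j$ on different paths are adjacent iff $(L_i,L_j)\in F$. So a same-layer partner on another path cancels the flip by XOR and reveals the path edges. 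The crux is defining the same-layer relation by one fixed formula for every $F$, or choosing induced subgraphs where this is possible. ``Use the third path as a reference'' does not yet provide that.
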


Notably, M{\"{a}}hlmann \cite{DBLP:conf/icalp/Mahlmann25} described an 1-dimensional interpretation $\iota$, an induced subgraph $Q_t$ of $5P_t$, $t \ge 4$, and an induced subgraph $W^H_t$, $t \ge 5$ of a flipped half-graph $H$ of order $t$ such that the following holds:
\begin{itemize}
    \item Let $H$ be a flipped half-graph of order $t \ge 5$. Then, $\iota(W^H_t)$ is a path of length $t-4$.
    \item Let $F$ be a layerwise flip of $5P_t$ where $t \ge 4$. Then, $\iota(F \oplus Q_t)$ is a path of length $t$.
\end{itemize}
It immediately follows from their constructions of $Q_t$ and $W^H_t$ that one can design a simple algorithm that, given $t$, produces $Q_t$ and $W^H_t$ in time $\ca O(t)$. In the case of $W^H_t$, the algorithm additionally needs the information which of the four possible flipped half-graphs of a given order it is supposed to compute.

\begin{retheorem}[store=thm:unbounded-shrub-depth-hardness,label=thm:unbounded-shrub-depth-hardness]
    Let $\ca C$ be a hereditary class of unbounded shrub-depth. Assume that
    one of the following holds:
    \begin{enumerate}[(a)]
        \item\label{item:has-flipped-half-graphs} For every $t \in \NN$, $\ca C$ contains a flipped half-graphs of order $t$; or
        \item\label{item:algo-flip-of-tPt} there is a polynomial-time algorithm $\AA_{\ca C}$ that, given a unary representation of an integer $t$, produces a layerwise flip $F$ of $tP_t$ such that $F \oplus tP_t \in \ca C$.
    \end{enumerate}
    Then, the \textsc{FO model-checking} problem is \AW{*}-hard when restricted to $\ca C$ and parameterized by the number of variables of the formula.
\end{retheorem}
\begin{proof}
    We start by showing that there is an algorithm that, given a unary representation of an integer $t\ge5$, produces a graph $G_t$ such that $\iota(G_t)$ is a path of length $t$, where $\iota$ is the aforementioned interpretation due to M{\"{a}}hlmann \cite{DBLP:conf/icalp/Mahlmann25}.

    If Item~(\ref{item:has-flipped-half-graphs}) holds, then $\ca C$ contains one of the four possible flips of the half-graph infinitely many times, and we can thus trivially obtain an algorithm that, given an integer $t$, outputs the flip $F$ of a half-graph $H_{t+4}$ of order $t+4$ such that $F \oplus H_{t+4} \in \ca C$. We construct the aforementioned induced subgraph $W^{F \oplus H_{t+4}}_{t+4}$. By doing so, we compute a graph ${G_t:=W^H_{t+4} \in \ca C}$ such that $\iota$-interpretation of $G_t$ is a path of length $t$.
    
    Otherwise, we use the algorithm of Item~(\ref{item:algo-flip-of-tPt}) to obtain the flip $F$ of $tP_t$ such that $F \oplus tP_t \in \ca C$. We construct the aforementioned induced subgraph $Q_t$. We again obtain a graph $G:=F \oplus Q_t \in \ca C$ such that $\iota$-interpretation of $G_t$ is a path of length $t$.

    Let $\phi$ be an FO$^s$ formula, and let $P$ be a path of length $t$. If $t < 5$, we can use the naive evaluation algorithm to decide $P \models \phi$. Hence, we assume that $t \ge 5$. Since $P$ is isomorphic to $\iota(G_t)$, it follows from the Backwards Translation Lemma that there is a formula $\phi[\iota]$ such that $P \models \phi$ if and only if $G_t \models \phi[\iota]$. Furthermore, the number of variables of $\phi[\iota]$ is at most $s+c$, where $c$ is a constant depending only on $\iota$. Hence, mapping $(\phi, P) \mapsto (\phi[\iota], G_t)$ is the desired reduction. Thus, \AW{*}-hardness follows from \Cref{thm:paths-hard}.
\end{proof}

\section{Conclusions}\label{sec:concl}

We have shown that FO model checking on colored shallow trees admits \FPT-time algorithm parameterized by the quantifier rank of the formula. It seems natural to ask whether the same parameterization would also work with monadic second-order logic. We note that there are some non-trivial obstacles towards such an algorithm, but we conjecture it to be possible:

\begin{conjecture}
    There is an algorithm $\AA$ and a function $f:\NN^4 \to \NN$ such that the following holds.
    Let $\phi$ be a formula of monadic second-order logic using $s_1$ first-order and $s_2$ monadic second-order variables. Let $T$ be a $c$-colored tree of depth $d$. Then, $\AA$ decides $T \models \phi$ in time $\ca O(f(s_1, s_2, c, d) \cdot (n+|\phi|))$.
\end{conjecture}

In the case of monotone classes, we have fully resolved \Cref{q:fo-mc-param-by-num-var}. However, the question is still open in the hereditary setting, and we thus conjecture that \Cref{thm:unbounded-shrub-depth-hardness} can be strengthened, and that bounded shrub-depth is the right ``limit'' of efficient FO model checking on hereditary graph classes when parameterized by the number of variables.

\begin{conjecture}\label{conj:unbounded-shrub-depth-hardness}
Let $\ca C$ be a hereditary class of unbounded shrub-depth. Then, the \textsc{FO model-checking} problem is \AW{*}-hard when restricted to $\ca C$ and parameterized by the number of variables of the formula.
\end{conjecture}

We note that, as an easy corollary of \Cref{thm:reduce-graph}, we obtain another characterization of hereditary classes of bounded shrub-depth:
\begin{recorollary}[store=cor:shrub-depth-unbounded-num-types,label=cor:shrub-depth-unbounded-num-types]
    There is a constant $s_0 \in \NN$ such that the following holds. Let $\ca C$ be a hereditary graph class, and let $s \in \NN, s \ge s_0$, and let $\tp^s(\ca C):=\{\tp^s(G) \mid G \in \ca C\}$ be the class of all FO$^s$ types of graphs from $\ca C$. Then, $\ca C$ has bounded shrub-depth if and only if there is $c \in \NN$ such that $|\tp^s(\ca C)| \le c$.
\end{recorollary}
    \begin{proof}{}\,{}
        \begin{itemize}
            \item Let $\ca C$ be a hereditary graph class with unbounded shrub-depth. Then, there is a 1-dimensional interpretation $\iota$ such that $\iota(\ca C)$ is the class of all paths. It is easy to observe that Spoiler wins the FO$^3$ pebble game between any pair of paths of distinct lengths, so the class of all paths has an unbounded number of FO$^3$ types. Hence, $\ca C$ also has an unbounded number of FO$^{s'}$ types for some $s'$ depending only on the interpretation $\iota$.
            \item Let $\ca C$ be a graph class with bounded shrub-depth. Then, there is a 1-dimensional interpretation $\iota$ and constants $d$ and $c$ such that $\ca C \subseteq \iota(\ca T_{d, c})$, where $\ca T_{d,c}$ denotes the class of all $c$-colored trees of depth $d$. Observe that it suffices to show that the number of FO$^{s+s''}$ types of trees from $\ca T_{d,c}$ is bounded for some constant $s''$ depending only on the interpretation $\iota$. This follows immediately from \Cref{thm:reduce-graph}, because each tree from $\ca T_{d,c}$ is FO$^{s+s''}$ equivalent to one of the finitely many $c$-colored trees on a bounded number of vertices.
        \end{itemize}
    \end{proof}

\Cref{cor:shrub-depth-unbounded-num-types} gives us another tool for proving that a graph class $\ca C$ has unbounded shrub-depth, because it suffices to show that there is an infinite subclass $\ca D \subseteq \ca C$ such that Spoiler wins the FO$^s$ pebble game between any pair of distinct structures from $\ca D$ for some constant $s$.

A notable open question asked by M{\"{a}}hlmann \cite{DBLP:conf/icalp/Mahlmann25} is whether the following are equivalent for any hereditary graph class $\ca C$:
\begin{itemize}
    \item $\ca C$ has bounded shrub-depth.
    \item There is $t \in \NN$ such that $\ca C$ excludes all flipped half-graphs of order $t$ and all layerwise flipped of $2P_t$.
\end{itemize}

One direction of their question is simple -- layerwise flipped $tP_t$ also contains layerwise flipped $2P_t$. We can thus restate their question in the following, perhaps more approachable, form:
\begin{question}
    Let $\ca C$ be a class consisting of a layerwise flipped $2P_t$ for each $t \in \NN$. Is there a constant $s\in\NN$ and a subclass $\ca D \subseteq \ca C$ such that Spoiler wins the FO$^s$ pebble game between any pair of graphs $H \neq G \in \ca D$?
\end{question}

\bibliography{bibliography}

\appendix

\end{document}